%
\documentclass[runningheads]{llncs}

\usepackage{amsthm}
\usepackage[T1]{fontenc}
%
\usepackage{cite}
\usepackage{amsmath,amsthm,amssymb,amsfonts}
\usepackage{amsmath}  
\usepackage{float}
\usepackage[ruled,vlined,boxed]{algorithm2e}

\newtheorem{define}{Definition}
\usepackage{graphicx}
\usepackage{subfigure}
\usepackage{makecell}
\usepackage{enumerate}
\usepackage{multirow}
\usepackage{color}
\usepackage{slashed}
\usepackage{bbding}
\usepackage{booktabs}
\usepackage{subfigure}
\usepackage{pifont}
\usepackage{url}
\usepackage{diagbox}
\usepackage{threeparttable}
\usepackage[marginal]{footmisc}

%
%
\usepackage[utf8]{inputenc}
\linespread{0.93}
\begin{document}

\title{\emph{SecGraph}: Towards SGX-based Efficient and Confidentiality-Preserving Graph Search \\(full version)\thanks{This paper has been accepted by DASFAA 2024.}}

%
%
\author{Qiuhao Wang\and
Xu Yang\and
Saiyu Qi\inst{(}\Envelope\inst{)} \and
Yong Qi}

\authorrunning{Wang. et al.}
%
\institute{Xi’an Jiaotong University, Xi’an, Shaanxi, China \\
\email{\{QiuHaoWang,yangxu\}@stu,xjtu.edu.cn,\{saiyu-qi,qiy\}@xjtu.edu.cn}}

\maketitle              

\footnote{The first two authors contributed equally and share the “co-first author” status. }

\begin{abstract}
Graphs have more expressive power and are widely researched in various search demand scenarios, compared with traditional relational and XML models. Today, many graph search services have been deployed on a third-party server, which can alleviate users from the burdens of maintaining large-scale graphs and huge computation costs. Nevertheless, outsourcing graph search services to the third-party server may invade users’ privacy. PeGraph was recently proposed to achieve the encrypted search over the social graph. The main idea of PeGraph is to maintain two data structures \emph{XSet} and \emph{TSet} motivated by the OXT technology to support encrypted conductive search. However, PeGraph still has some limitations. First, PeGraph suffers from high communication and computation costs in search operations. Second, PeGraph cannot support encrypted search over dynamic graphs. In this paper, we propose an SGX-based efficient and confidentiality-preserving graph search scheme \emph{SecGraph} that can support insertion and deletion operations. We first design a new proxy-token generation method to reduce the communication cost. Then, we design an LDCF-encoded \emph{XSet} based on the Logarithmic Dynamic Cuckoo Filter to reduce the computation cost. Finally, we design a new dynamic version of \emph{TSet} named \emph{Twin-TSet} to enable encrypted search over dynamic graphs. We have demonstrated the confidentiality preservation property of \emph{SecGraph} through rigorous security analysis. Experiment results show that \mbox{\emph{SecGraph}} yields up to 208$\times$ improvement in search time compared with \mbox{PeGraph} and the communication cost in PeGraph is up to 540$\times$ larger than that in \emph{SecGraph}.

\end{abstract}

\section{Introduction}
Graphs are gaining increasing attention since they have expressive power and play an important role in many applications, such as social networks \cite{DBLP:conf/dasfaa/WangZYSX21}, biological data analyses \cite{DBLP:conf/ijcai/ZhaoQYZGLL23}, recommender systems \cite{DBLP:conf/wsdm/GaoW0022}, etc. Essentially, this is because the core data involved in these applications can be conveniently represented as graphs. For example, social networks (e.g., Facebook \cite{Facebook} and Instagram \cite{Instagram}) constitute various social users, essentially a graph where vertices represent users and edges represent their relationships, such as friendship. The wide use of graphs has brought about the emergence of graph search, such as shortest path search \cite{DBLP:conf/asiaccs/GhoshKT21} and neighbor query \cite{DBLP:conf/dasfaa/ZhangTSZXCZXLZ23}, etc. As the scale of the graph surges, graph owners (e.g., enterprises for graph based services) desire to outsource their graphs to a third-party server, which however may snoop sensitive user information, e.g., users' social connections, interests, and potentially sensitive information.

To enable encrypted graph search, some solutions rely on structure encryption \cite{DBLP:conf/ndss/CashJJJKRS14} to enable private adjacent vertices search over encrypted graph \cite{DBLP:conf/asiacrypt/ChaseK10,DBLP:conf/trustcom/TengCSB16,DBLP:conf/ccs/LaiYSLLL19}. However, these solutions just support single keyword search (i.e., search for the neighbor vertices of a vertex). Recently, Wang \textit{et al.}\cite{DBLP:journals/tifs/WangZJY22} proposed PeGraph based on the OXT technology \cite{DBLP:conf/crypto/CashJJKRS13} to support encrypted conductive search over social graphs, i.e., search for common neighboring vertices (e.g., friends) of multiple vertices (e.g., people). Specifically, PeGraph maintains a multimap data structure \emph{TSet} to store the encrypted form of mapping from vertices to all their neighbor vertices and a set \emph{XSet} to store pairs of adjacent vertices information at the server side. Upon receiving search keywords (e.g., $w_1 \wedge ... \wedge w_n$)\footnote{In this paper, we assume that the search keywords count issued by a client is $n$ in the form of $w_1 \wedge ... \wedge w_n$ and $w_1$ is the least frequent unless otherwise specified.}, the server first finds the matching neighbor vertices of $w_1$ in \emph{TSet} as an initial search result. Then, for each of its neighbor vertex, the server checks the existence of all pairs where each one is made up by it with a remaining search keyword $w_i, i\in[2,n],$ in \emph{XSet} to determine whether to insert it to the final result.

However, PeGraph suffers from three limitations. First, PeGraph incurs high communication costs in a search operation since it requires two search roundtrips between a client and the server, due to the characteristic of the OXT technology \cite{DBLP:conf/crypto/CashJJKRS13}. Second, PeGraph incurs high computation cost since it requires $c\cdot (n\text{-}1)$ expensive exponentiation modulo operations to check the existence of each neighbor vertex in the initial search result, where $c$ is the number of neighbor vertices in the initial search result and $n$ denotes the search keywords count. Last, PeGraph cannot support encrypted search over dynamic graphs since the OXT technology is designed to be applied to static encrypted search. 

In this paper, we propose an \emph{SGX-based efficient and confidentiality-preserving graph search scheme (SecGraph)} to provide secure and efficient search services over encrypted dynamic graphs. To address the first limitation, we design a \emph{proxy-token generation method} by exploiting the trusted hardware SGX as the client's trusted proxy on the server side to reduce the communication cost of search operations. To address the second limitation, we design a novel data structure \emph{LDCF-encoded XSet} based on the Logarithmic Dynamic Cuckoo Filter \cite{DBLP:conf/icde/ZhangC0R21} to transform the expensive exponentiation modulo operations required for existence checking operations in the OXT technology into the LDCF-based membership check process which can be conducted within storage-constraint SGX to reduce the computation cost. To address the last limitation, we design a new dynamic version of \emph{TSet} named \emph{Twin-TSet} to record the relationships between adjacent vertices to efficiently handle encrypted searches over dynamic graphs. In summary, the contributions are the following.

\begin{enumerate}

\item[$\bullet$] We propose an SGX-based efficient and confidentiality-preserving graph search scheme \emph{SecGraph} to provide secure and efficient graph search services.

\item[$\bullet$] In \emph{SecGraph}, we design a new proxy-token generation method to reduce the communication cost. Then, we design an LDCF-encoded \emph{XSet} to reduce the computation cost. Moreover, we design a new dynamic version of \emph{TSet} named \emph{Twin-TSet} to enable encrypted search over dynamic graphs.

\item [$\bullet$] We further extend \emph{SecGraph} into two optimized schemes \emph{SecGraph-G} and \emph{SecGraph-P} by adopting the fingerprint grouping and checking parallelization strategies, respectively, to speed up the search procedure.

\item[$\bullet$] Finally, experiment results show that \mbox{\emph{SecGraph}}, \emph{SecGraph-G} and \mbox{\emph{SecGraph-P}} respectively yield up to 208$\times$, 572$\times$ and 3,331$\times$ improvements in search time compared with the state-of-the-art scheme PeGraph, and the communication cost in PeGraph is up to 540$\times$ larger than that in \emph{SecGraph}.
\end{enumerate}

\section{Related Work}

Graphs are widely used to model structure data in various domains (e.g., social networks \cite{DBLP:conf/dasfaa/WangZYSX21}, and protein structures \cite{DBLP:conf/ijcai/ZhaoQYZGLL23}, and more) due to their excellent capability in characterizing the complex interconnections among entities. Graph search, as one of the fundamental tasks in graph analytics, has gained increasing attention in recent years. Numerous algorithms have been proposed to address diverse types of graph search, such as graph similarity search\cite{DBLP:journals/tkde/ChenHHVZZ21} and shortest path search\cite{DBLP:conf/asiaccs/GhoshKT21}, etc. However, all of them focus on graph search in the plaintext domain without considering privacy preservation.

Along with the development of the data outsourcing paradigm, enabling efficient search over encrypted outsourced graphs has attracted widespread attention from both academia and industry. Structure encryption \cite{DBLP:conf/ndss/CashJJJKRS14} is a promising cryptographic technique to provide private adjacent vertices search over encrypted graphs. For example, Chase \textit{et al.} \cite{DBLP:conf/asiacrypt/ChaseK10} firstly extended the notion of structured encryption to the setting of arbitrarily structured data including complex graph data. Based on this, Shen \textit{et al.} \cite{DBLP:conf/trustcom/TengCSB16} proposed a privacy-preserving scheme PP$k$NK to achieve top-$k$ nearest keyword search on graphs. Furthermore, Lai \textit{et al.} \cite{DBLP:conf/ccs/LaiYSLLL19} proposed a privacy-preserving graph scheme GraphSE$^2$ to facilitate parallel and encrypted graph data access on large social graphs. Recently, Wang \textit{et al.}\cite{DBLP:journals/tifs/WangZJY22} proposed PeGraph using the OXT technique \cite{DBLP:conf/crypto/CashJJKRS13} and a more lightweight secure multi-party computation method to expand the richness of search and improve performance. PeGraph is state-of-the-art privacy-preserving social graph search scheme. However, there are some drawbacks. First, PeGraph incurs high communication costs since it requires two search roundtrips between a client and the server during the search procedure. Second, PeGraph suffers from high computation costs since it requires expensive exponentiation modulo operations to check the existence of each entry in the initial search result. Last, PeGraph cannot support encrypted search over dynamic graphs since the OXT technology is applied to static encrypted search. In this paper, we devote ourselves to achieve an efficient and confidentiality-preserving graph search scheme while supporting dynamic graph updates. 

\section{Preliminaries}
\textbf{Intel SGX and Enclave.} Intel SGX \cite{cryptoeprint:2016/086} is a set of extensions of x86 instructions that provides trusted execution environments (i.e., \emph{enclave}) to protect the integrity and confidentiality of the application data and the code. The enclave is limited to 128MB to store data. If this limit is exceeded, the enclave will automatically apply the page-swapping mechanism, causing severe performance degradation. The enclave has three main security properties: (1) \emph{isolation:} any software outside the enclave can not directly access code or data within it, but it can access the entire virtual memory of its untrusted host; (2) \emph{sealing:} it enables the process of encrypting and authenticating the data within it; (3) \emph{attestation:} there is a secure channel between an external party and the enclave.

\noindent\textbf{Oblivious Cross-Tags Protocol.} 
Oblivious cross-tags protocol (OXT) \cite{DBLP:conf/crypto/CashJJKRS13} is a searchable encryption technique designed for secure and efficient conjunctive search over text files. Generally, the core idea of OXT is maintaining a multimap data structure \emph{TSet} and a set data structure \emph{XSet} at the server side, \emph{TSet} records a set of pairs ($fid,y$) for each keyword $w$ labeled by a corresponding value $stag(w)$ where $fid$ is the encrypted file identifier and $y$ is a blinded value, \emph{XSet} records a list of values $xtag(w, fid)$ over each keyword-file-identifier pair. Specifically, given a pseudo-random function (PRF) $F$ and the keys ($K_I,K_X,K_T,K_Z$), we assume there are $c$ files $\{fid_i\}, i \in [1,c]$ containing the keyword $w$. The client sequentially calculates $stag(w) = F(K_T,w)$, $y=F(K_I,fid_i) \cdot F(K_Z,w||i)^{-1}$, and $xtag(w,fid_i) = g^{F(K_X,w) \cdot F(K_I,fid_i)},i \in [1,c]$, where $g$ is the generator of a cyclic group and $\text{'||'}$ represents the concatenation operation. When issuing a conjunctive search (e.g., $w_1 \wedge ... \wedge w_n$), the client first sends the search token $stag(w_1)$ to retrieve all matching encrypted $fid$ containing $w_1$ from the server. With the size $c$ of the initial search result just retrieved, the client then generates and sends the intersection tokens $xtoken(w_1,w_i,j) = g^{F(K_Z,w_1||j) \cdot F(K_X,w_i)}, i\in [2,n],j \in [1,c]$ to the server. Upon receiving them, for each encrypted $fid$ in the initial search result, the server will determine whether to insert it into the final search result by checking $xtoken(w_1,w_i,j)^y = g^{F(K_X,w_i) \cdot F(K_I,fid_j)} \overset{\text{?}}{=} xtag(w_i,id_j), i\in [2,n]$ for each $j \in [1,c]$. Using the OXT technology, the server can exactly find the search result without knowing either file identifier $fid$ or keywords $w_i,i\in [2,n]$.

\begin{figure}[t]
    	\centering
    	\includegraphics[width=1\linewidth]{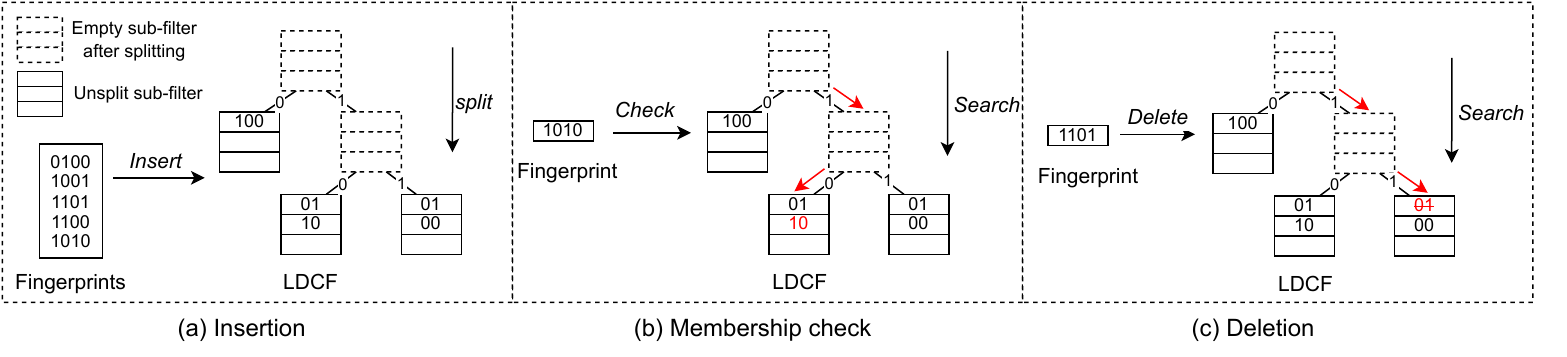}
    	\caption{Illustration of the data structure LDCF and three associated operations.}
    	\label{fig:LDCF}
\vspace{-1.8em}
\end{figure}

\noindent\textbf{Cuckoo Filter and Dynamic Cuckoo Filter.}
Cuckoo Filter (CF) \cite{DBLP:conf/conext/FanAKM14} is a compact data structure that enables approximate set membership checks in static settings. The main idea of the CF is each element is hashed to one or more buckets, and these buckets store fingerprint information derived from the hashed values, where the fingerprint is a small portion of the original hash value, typically a few bits. To support efficient inserting, deleting, and checking an item in the CF, Zhang \textit{et al.} \cite{DBLP:conf/icde/ZhangC0R21} extend the traditional CF to a novel data structure Logarithmic Dynamic Cuckoo Filter (LDCF) where a fully inserted CF is divided into two CFs in a binary tree shape recursively based on the prefix of the fingerprints, as shown in Fig. \ref{fig:LDCF}. Due to the features of the tree structure, LDCF achieves a sub-linear search and reduces space storage overhead by omitting the storage space of common prefixes for the fingerprints during the process of tree expansion. For convenience, we refer to each CF in LDCF as a \emph{sub-filter} in this paper. LDCF consists of three algorithms:  

\begin{enumerate}
\item[$-$] LDCF.\textsf{Insert}$(x)$: Upon receiving an inserting element $x$, the algorithm first computes its fingerprint $\delta$ and uses it to locate the matching \emph{sub-filter}. Then it calculates two candidate positions $\mu$ and $\nu$ and finally puts $\delta$ in one of the two candidate positions in the matching \emph{sub-filter}. Fig. \ref{fig:LDCF} (a) shows the procedure of inserting five fingerprints into an LDCF. 


\item[$-$] LDCF.\textsf{Membershipcheck}$(x)$: Upon receiving a query element $x$, the algorithm first calculates its fingerprint $\delta$ and finds the matching \emph{sub-filter}. Then, it derives the corresponding positions $\mu$ and $\nu$ to check whether $\delta$ exists in one of them and returns true if exists, otherwise false. Fig. \ref{fig:LDCF} (b) shows the procedure of checking a fingerprint whether exists in the LDCF. 

\item[$-$] LDCF.\textsf{Delete}$(x)$: Upon receiving a deleting element $x$, the algorithm first calculates its fingerprint $\delta$ and finds the matching \emph{sub-filter}. Then, it derives the corresponding positions $\mu$ and $\nu$ to locate $\delta$, and finally deletes it and returns true, otherwise false. Fig. \ref{fig:LDCF} (c) shows the procedure of deleting an existing fingerprint in the LDCF. 
\end{enumerate}

\section{System Overview}
\subsection{Problem Definition}

\begin{figure}[t]
    	\centering
    	\includegraphics[width=0.7\linewidth]{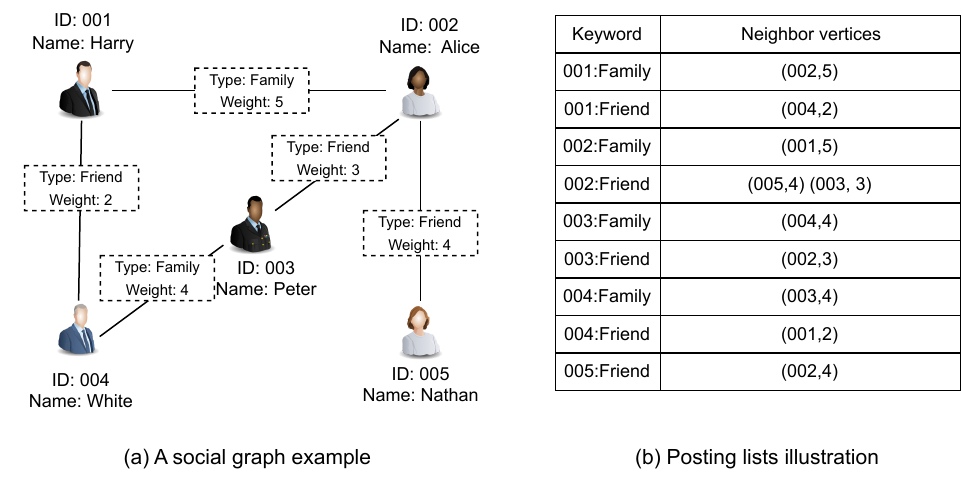}
    	\caption{A toy example for illustrating a social graph.}
    	\label{fig:fig1}
\vspace{-1.8em}
\end{figure}

Formally, the social graph can be represented by $G = (V, E)$, where $V$ is a vertex set in the graph and $E$ is a relationship set among vertices. As shown in Fig. \ref{fig:fig1}(a), each $v \in V$ represents a vertex identified by $id$, it also has an attribute \emph{name}, e.g., the vertex '001' named 'Harry'. Each $e \in E$ represents an edge between vertices identified by $type$, with $weight$ representing the importance of the edge, e.g., there is a friendship relationship between vertices '001' and '002', with an importance of 5. A relationship can be defined by $(id_{out}, id_{in}, type)$, which means the edge labeled by $type$ starts from $id_{out}$ and ends with $id_{in}$. As shown in Fig. \ref{fig:fig1}(b), our system maintains a posting list for each vertex. The posting list is a map data structure, where the keyword is $id_{out}:type$ and the value is composed of a set of ($id_{in}, weight)$ pairs, where $id_{in}$ is the identifier of the neighbor vertex of $id_{out}$, and $weight$ represents the importance of the relationship, e.g., the intimacy between friends. In this paper, we denote $id_{out}$ as $w$, and the search types on graphs are divided into two types: (1) \emph{Exact search.} Search for neighbor vertices with a specific relationship to multiple vertices, e.g., the common friend of vertices '003' and '005' is vertex '002'. (2) \emph{Fuzzy search.} Search for vertices whose \emph{name} contain the same sub-string, e.g., the vertices with 'ha' in their \emph{name} include vertices '001' and '005'.

\subsection{System Model}
There are two entities in the system: the client and the SGX-enabled server. (1) \emph{Client.} The client launches a remote attestation and establishes a secure channel with the trusted part (i.e., enclave) of the server, then sends a set of secret keys to the enclave. The client can securely insert or delete edges or vertices, and issue conjunctive search. (2) \emph{SGX-enabled server:} The server will insert or delete the corresponding encrypted index and provide graph search services for the client. We assume that the client is absolutely honest but the server is honest-but-curious. That is, the server will execute the established program correctly, but powerful enough to get full access over the software stack (such as OS and hypervisor) outside of the enclave, and can infer sensitive information from encrypted data by observing search tokens and search results. We state that a series of side-channel attacks against SGX are out of our scope.

In this paper, \emph{SecGraph} aims to achieve three goals in terms of efficiency, functionality, and security : (1) \emph{Efficient search.} \emph{SecGraph} should provide efficient search while reducing communication and computation costs. (2) \emph{Provision dynamic update.} \emph{SecGraph} should support encrypted search over dynamic graphs. (3) \emph{Confidentiality preservation.} \emph{SecGraph} should protect the confidentiality of search keywords and results.


\section{Detailed Construction of \emph{SecGraph}}

\subsection{Design Rationale of \emph{SecGraph}}

As mentioned, Wang \textit{et al.}\cite{DBLP:journals/tifs/WangZJY22} proposed PeGraph that can support versatile search over encrypted social graphs. The main idea of PeGraph is combining the OXT technology \cite{DBLP:conf/crypto/CashJJKRS13} and the additive secret sharing\cite{DBLP:conf/ndss/Demmler0Z15} to provide private ranked conjunctive search. Specifically, PeGraph first extracts each keyword $w$ in the form of $id_{out}:type$ from graphs and associates the posting list $\{(id_{in}, weight)\}$ with $w$. Then, for each $w$, PeGraph calculates $stag(w)$ to index its corresponding posting list. In addition, for each $id_{in}$ in the set $\{(id_{in},weight)\}$, PeGraph calculates $xtag(w,id_{in})$ and adds it to \emph{XSet}, which allows the server to check the existence of an $id_{in}$ without knowing either $w$ or $id_{in}$. After obtaining the matching $id_{in}$, PeGraph further allows the server to obliviously render the encrypted search results in a ranked order according to their importance $weight$ by using the additive secret sharing primitive and returning the top-$k$ results, avoiding unnecessary downlink traffic. 

Despite its considerable performance, PeGraph still has some problems due to the characteristics of OXT technology \cite{DBLP:conf/crypto/CashJJKRS13}. First, PeGraph incurs high communication costs since it requires two search roundtrips between a client and the server during the search procedure. The reason is that the OXT technology requires the client to retrieve the encrypted matching posting list associated with a certain keyword (e.g., $w_1$) as an initial search result from the server in advance to assist in generating intersection tokens (e.g., $xtoken(w_1,w_i,j), i\in [2,n], j\in[1,c]$). Therefore, the first challenge is how to achieve one search roundtrip to reduce the communication cost. Second, PeGraph suffers from high computation cost since it requires $c\cdot (n\text{-}1)$ expensive exponentiation modulo operations to check the existence of each $id_{in}$ in the initial search result, where $c$ is the number of $id_{in}$ in the initial search result and $n$ is the number of search keywords. Therefore, the second challenge is how to check the existence of $id_{in}$ without using expensive exponentiation modulo operations to reduce computation cost. Finally, PeGraph cannot support encrypted search over dynamic graphs since the OXT technology only supports static encrypted search. Accordingly, the third challenge is how to enable dynamic \mbox{encrypted graph search.}

To address the first challenge, we propose a \emph{proxy-token generation method} to avoid the procedure of returning the initial search result to the client. The main idea is to resort to SGX as the client's trusted proxy at the server side to generate a search token $stag(w_1)$ and intersection tokens $xtoken(w_1,w_i,j), i\in [2,n], j\in[1,c]$ due to the trusted computing power of SGX. Specifically, when issuing a graph search, a client sends the search keywords $w_1 \wedge ... \wedge w_n$ to the enclave (i.e., the trusted part of the server) via a secure channel, and the latter generates $stag(w_1)$ and retrieves the initial search result from the untrusted part of the server to generate intersection tokens independently. In this way, only one search roundtrip is required between the client and the server, which largely reduces the communication cost.

To address the second challenge, we start with an observation from the OXT technology \cite{DBLP:conf/crypto/CashJJKRS13}. The server needs to perform a large number of expensive exponentiation modulo operations to check whether each $xtag(w_i,id_{in}), i\in[2,n]$ is in \emph{XSet} or not to determine whether $id_{in}$ belongs to the final search result, which can incur significant performance degradation. Fortunately, we note that the procedure of checking the existence of each $id_{in}$ in the initial search result is essentially a membership check process. Thus, a naive solution is to load the data structure \emph{XSet} into the enclave due to the trusted storage power of SGX and allow the enclave to check whether each $xtag(w_i,id_{in}), i\in[2,n]$ exists in \emph{XSet} in plaintext rather than perform exponentiation modulo operation over ciphertexts. However, as the entire keywords in graphs are encoded into \emph{XSet}, the size of \emph{XSet} will become large and may not be stored in the enclave due to its limited storage, which will incur the page-swapping mechanism and also cause performance degradation. To do this, we use a compact data structure Logarithmic Dynamic Cuckoo Filter (LDCF) \cite{DBLP:conf/icde/ZhangC0R21} to store \emph{XSet}. The \emph{LDCF-encoded} \emph{XSet} not only allows the enclave to load a few \emph{sub-filters} to complete the procedure of checking the existence of each $id_{in}$ in the initial search result instead of loading the entire \emph{sub-filter} set but provides a sub-linear search to locate the matching \emph{sub-filters} quickly. In this way, using the LDCF-encoded \emph{XSet}, the computation cost is largely reduced during the search procedure.

To address the last challenge, we note that applying the OXT technology \cite{DBLP:conf/crypto/CashJJKRS13} to the dynamic setting is not straightforward since the pre-computed multimap data structure \emph{TSet} does not need to change with time, thus it is impossible to emulate in the dynamic setting, where the graphs are continuously updated (including insertion and deletion). To do this, we design a new dynamic version of \emph{TSet} named \emph{Twin-TSet} that contains a pair of map data structures, one is \emph{TSet} storing the relationship between a pair made up by a keyword $w$ and an associated update count $c$ and a neighbor vertex $id_{in}$ the other is \emph{ITSet} (i.e., inverse \emph{TSet}) storing the relationship between a pair made up by a keyword $w$ and a neighbor vertex $id_{in}$ and the corresponding update count $c$. When inserting a pair $(w,id_{in})$, we first increase the update count $c$ associated with $w$ by one and then insert the two pairs $((w,c),id_{in})$ and $((w,id_{in}),c)$ into \emph{TSet} and \emph{ITSet} in the form of ciphertext respectively. When deleting a pair $(w,id_{in})$, we first retrieve the corresponding update count $c$ from \emph{ITSet} and then replace the $id_{in}$ at position $(w,c)$ of \emph{Tset} with the latest $id'_{in}$ at position $(w,c')$, where $c'$ is the latest update count associated with $w$. After that, we accordingly insert a new pair $((w,id'_{in}),c)$ into \emph{ITSet} and decrease $c$ by one. when searching $w$, similarly with the OXT technology, we only need to retrieve $c$ values from \emph{TSet} at positions $(w,i), i\in[1,c]$ without considering deleted identifiers $id_{in}$ since they have been deleted previously. At present, it seems that the \emph{Twin-TSet} can support encrypted search on dynamic graphs, but how to store the update count for each $w$ privately is still an issue. Fortunately, we can protect it from being leaked to the server by storing a map data structure \emph{UpdateCnt} within the enclave with its trusted storage power.

Until now, we have illustrated the design rationale of the proposed SGX-based efficient and confidentiality-preserving graph search (\emph{SecGraph}). It is worth noting that, due to the trusted computing power of SGX, \emph{SecGraph} can also support returning top-$k$ ranked search results to the client by directly decrypting and sorting the neighbor vertice identifiers according to their $weight$ without performing additive secret sharing operations over the encrypted posting lists.

\subsection{Design Details of \emph{SecGraph}}
Now we proceed to describe our construction of \emph{SecGraph} in detail. \emph{SecGraph} utilizes a set of PRFs ($F_1,F_2,F_3:\{0,1\}^\lambda \times \{0,1\}^* \rightarrow \{0,1\}^\lambda $), two hash functions $H_1: \{0,1\}^* \rightarrow \{0,1\}^\alpha$ and $H_2: \{0,1\}^* \rightarrow \{0,1\}^\xi$. The workflow of \emph{SecGraph} can be divided into the following protocols.

\setlength{\textfloatsep}{0.1em}
\begin{algorithm}[t]\scriptsize
\caption{\textsf{Update}}
\label{alg:alg2}
\SetNlSty{text}{}{:}
\SetAlgoNoLine
\SetAlgoNoEnd
\begin{multicols}{2}
\KwIn{Secrity keys $K_T, K_Z, K_X$, update counter \emph{UpdateCnt}, updated triplet $(w, id, weight)$, update operation $op$, an \emph{IndexTree}.}
\KwOut{Encrypted database \emph{EDB}=(\emph{TSet, ITSet, XSet}).}
\underline{Client:}\\
\nl Send $(w, id, weight),op$ to the enclave;\\
\underline{Enclave:}\\
\nl $xtag = (w||id),\delta = H_2(xtag),\mu = H_1(xtag)$;\\
\nl Get \emph{sub-filterId} according to $\delta$ and \emph{IndexTree};

\nl \textbf{if} $op=insert$ \textbf{then}\\
\nl ~~~ \textbf{if} $\emph{UpdateCnt}[w] = \perp$ \textbf{then}\\
\nl ~~~~~~~~$\emph{UpdateCnt}[w] = 0$;\\
\nl ~~~ $\emph{UpdateCnt}[w] \text{++} $;\\
\nl ~~~ $stag = \mathrm{\textsf{$F_1$}}(K_T,w|| \emph{UpdateCnt}[w])$;\\
\nl ~~~ \textcolor{black}{$C_{id} = (id||weight) \oplus \mathrm{\textsf{$F_2$}}(K_Z,w)$};\\
\nl ~~~ $ind = \mathrm{\textsf{$F_3$}}(K_X,w||id)$;\\
\nl ~~~ \textcolor{black}{$C_{stag} = (w|| \emph{UpdateCnt}[w]) \oplus \mathrm{\textsf{$F_2$}}(K_Z,w)$};\\
\nl ~~~ \textcolor{black}{Send $(\mathrm{\emph{sub-filterId}},stag,C_{id}, ind, C_{stag}, $ $\delta, \mu,$ $op)$ to the server};\\
\nl \textbf{else}\\
\nl ~~~$stag^{\text{-}1} = \mathrm{\textsf{$F_1$}}(K_T,w|| \emph{UpdateCnt}[w])$;\\
\nl ~~~$\emph{UpdateCnt}[w] \text{-}\text{-}$;\\ 
\nl ~~~$C^{\text{-}1}_{id} = \emph{TSet}[stag^{\text{-}1}]$    \tcc*{ocall} 
\nl ~~~$(id^{\text{-}1}||sort^{\text{-}1}_k) = C^{\text{-}1}_{id} \oplus \mathrm{\textsf{$F_2$}}(K_Z,w)$;\\
\nl ~~~$ind^{\text{-}1} \text{=} \mathrm{\textsf{$F_3$}}(K_X,w||id^{\text{-}1}),ind \text{=} \mathrm{\textsf{$F_3$}}(K_X,w||id)$;\\
\nl ~~~\textcolor{black}{$C_{stag} = \emph{ITSet}[ind]$} \tcc*{ocall}  
\nl ~~~\textcolor{black}{$(w||c) \text{=} C_{stag} \oplus \mathrm{\textsf{$F_2$}}(K_Z,w)$; \\ 
\nl ~~~$stag\text{=}\mathrm{\textsf{$F_1$}}(K_T,w||c)$};\\

\nl ~~~\textcolor{black}{Send $(\mathrm{\emph{sub-filterId}},stag,stag^{\text{-}1},ind, ind^{\text{-}1}, \delta,$ $ \mu,op)$ to the server;}\\
\underline{Server:}\\
\nl \textbf{if} $op=insert$ \textbf{then}\\
\nl ~~~$\emph{TSet}[stag] = C_{id}$, $\emph{ITSet}[ind] = C_{stag}$; \\
\nl ~~~$\emph{XSet}[\mathrm{\emph{sub-filterId}}].\mathrm{\textsf{Insert}}(\delta, \mu)$; \\
\nl \textbf{else}\\

\nl ~~~$\emph{TSet}[stag] = \emph{TSet}[stag^{\text{-}1}]$;\\
\nl ~~~$\emph{ITSet}[ind^{\text{-}1}] = \emph{ITSet}[ind]$\\
\nl ~~~Delete $\emph{ITSet}[ind]$;\\
\nl ~~~$\emph{XSet}[\mathrm{\emph{sub-filterId}}].\mathrm{\textsf{Delete}}(\delta, \mu)$;\\
\nl Update \emph{IndexTree} if necessary \tcc*{ecall}
\end{multicols}

\end{algorithm}

\textsf{SetUp} \textbf{Protocol}. This protocol is responsible for initializing some secret keys and data structures. Specifically, the client launches a remote attestation and establishes a secure channel with the enclave first. Then it generates secret keys $(K_T, K_Z, K_X)$ for PRFs $(F_1, F_2, F_3)$ and sends them to the enclave. The enclave initializes three empty data structures: (1) \emph{UpdateCnt} that stores the number of $id_{in}$ corresponding to each $w$; (2) \emph{CFs} that stores a mapping from \emph{sub-filtersId} to cache \emph{sub-filters}; (3) \emph{IndexTree} that stores the split state of LDCF-encoded \emph{XSet} to locate the matching \emph{sub-filters} during the update and search procedures. The server initializes three empty data structures: (1) \emph{TSet} that stores the encrypted posting list; (2) \emph{ITSet} that stores a mapping from the neighbor vertices' $id$ to their location in the posting list to locate the deleting neighbor vertices during the deletion procedure; (3) \emph{XSet} that stores the fingerprints of all $xtag$. 


\begin{algorithm}[t]

\caption{\textsf{Search}}
\label{alg:alg3}\scriptsize
\SetNlSty{text}{}{:}
\SetAlgoNoLine
\SetAlgoNoEnd
\begin{multicols}{2}
\KwIn{Secrity keys $K_T, K_Z, K_X$, update counter \emph{UpdateCnt}, search token $(w_1 \wedge...\wedge w_n)$, encrypted database \emph{EDB}=(\emph{TSet, ITSet, XSet}).}
\KwOut{search result \emph{ResList}.}
\underline{Client:}\\
\nl Send $(w_1 \wedge...\wedge w_n)$ to the enclave;\\
\underline{Enclave:} \\
\nl Initialize two empty lists $stokenList$, $CResList$;\\
\nl \textbf{for} $j = 1$ \emph{\textbf{to}} \emph{UpdateCnt}$[w_1]$ \textbf{do} \\
\nl ~~~$stag = \mathrm{\textsf{$F_1$}}(K_T,w_1||j)$;\\
\nl ~~~Insert $stag$ into the $stokenList$;\\

\nl Send $stokenList$ to the server;\\

\underline{Server:}\\
\nl Initialize an empty list $CidList$;\\
\nl \textbf{for} $j = 1$ \emph{\textbf{to}} $stokenList.size$ \textbf{do} \\
\nl ~~~$C_{id_j} = \emph{TSet}[stokenList[j]]$;\\
\nl ~~~Insert $C_{id_j}$ into the $CidList$;\\
\nl Send $CidList$ to the enclave;\\

\underline{Enclave:}\\
\nl Initialize an empty map $CFs$;\\
\nl \textbf{for} {$\mathrm{\textsf{each }} C_{id_j} \in CidList$} \textbf{do}\\
\nl ~~~$flag = 0$;\\
\nl ~~~$(id_j||weight_j) = C_{id_j} \oplus \mathrm{\textsf{$F_2$}}(K_Z,w_1)$;\\
\nl ~~~\textbf{for} $i = 2$ \emph{\textbf{to}} $n$ \textbf{do} \\
\nl ~~~~~~$xtag_{i,j} = (w_i||id_j),\delta =H_2(xtag_{i,j})$;\\
\nl ~~~~~~Get \emph{sub-filterId} according to $\delta$ and \emph{IndexTree};\\
\nl ~~~~~~\textbf{if} \emph{sub-filterId} not in $CFs$ \textbf{then} \\
\nl ~~~~~~~~~Load matching \emph{sub-filter}\tcc*{ocall} 

\nl ~~~~~~~~~$CFs[\mathrm{\emph{sub-filterId}}] = \mathrm{\emph{sub-filter}}$;\\
\nl ~~~~~~$\mu = H_1(xtag_{i,j})$, $\nu = \mu \oplus H_1(\delta)$;\\
\nl ~~~~~~\textbf{if} $\delta$ not in $CFs$[\emph{sub-filterId}]$[\mu]$ \textsf{and} $CFs$[\emph{sub-filterId}]$[\nu]$ \\
\nl ~~~~~~~~~$flag = 1$, break;\\
\nl ~~~\If{$flag = 0$}{
\nl ~~~Insert $C_{id_j}$ into the $CResList$;\\   
}
\nl Send $CResList$ to the client;\\

\underline{Client:}\\
\nl Initialize an empty list $ResList$;\\
\nl \textbf{for} $\mathrm{\textsf{each }} C_{id_j} \in ResList$ \textbf{then} \\
\nl ~~~$(id_j|| \mathrm{\emph{$weight_j$}}) = C_{id_j} \oplus \mathrm{\textsf{$F_1$}}(K_T,w_1)$;\\
\nl ~~~Insert $id_j$ into the $ResList$;\\
\end{multicols}
\end{algorithm}
\textsf{Update} \textbf{Protocol}. This protocol allows the client to issue graph updates. At the beginning, the client generates an update request $(w, id, weight, op)$, where $op$ is the operation type (insertion or deletion) (line 1). Upon receiving it, the enclave first calculates $xtage$ using $w$ and $id$ and generates its fingerprint $\delta$ and candidate index $\mu$ (line 2). Then, the enclave searches the path of the matching \emph{sub-filter} \emph{sub-filterId} in \emph{IndexTree} based on $\delta$ (line 3). Specifically, the enclave traverses each bit of $\delta$ to deeply search the \emph{IndexTree} by selecting the left subtree when the bit is 0, otherwise selecting the right subtree, until finding a leaf node, and the traversed $\delta$ sub-string is the path of the \emph{sub-filter} \emph{sub-filterId}. If $op=insert$, the enclave increases the \emph{UpdateCnt}$[w]$ by 1, then encrypts the $(w||c, id)$, $(w||id,c)$ pairs in forms of $(stag,C_{id})$ and $(ind,C_{stag})$ for \emph{TSet} and \emph{ITSet} respectively. Finally, the enclave sends $(\delta, \mu, \mathrm{\emph{sub-filterId}})$ and the pairs to the server (lines 4-12). If $op=delete$, the enclave first calculates $w$'s latest $id$ index $stag^{-1} = F_1(K_T,w||\mathrm{\emph{UpdateCnt}}[w])$ and decreases \emph{UpdateCnt}$[w]$ by 1, then loads the latest encrypted $id$ (i.e., $C_{id}^{-1}$) and encrypts it (lines 14-17). Next, the enclave computes the indexes of the latest $id$ (i.e., $ind^{-1}$) and the deleting $id$ (i.e., $ind$) in \emph{ITSet} (line 18). Then, the enclave gets the deleting $id$'s encrypted index in \emph{TSet} (i.e., $C_{stag}$) from \emph{ITSet}$[ind]$ and decrypts it to $stag$ (lines 19-21). Finally, the enclave sends $(stag,stag^{-1},ind,ind^{-1},\delta,\mu,\mathrm{\emph{sub-filterId}})$ to the server (line 22). Upon receiving it from the enclave, if $op=insert$, the server will insert the $(stag, C_{id})$ and $(ind, C_{stag})$ pairs into \emph{TSet} and \emph{ITSet} respectively and insert the fingerprint $\delta$ into the \emph{XSet} (lines 23-25). If $op=delete$, the server will overwrite the deleting $id$ with the latest $id$ in \emph{TSet} and \emph{ITSet} then delete the fingerprint $\delta$ and deleting $ind$ in \emph{XSet} and \emph{ITSet} respectively to remove the corresponding $id$ (lines 26-30). Finally, if the LDCF-encoded \emph{XSet} is split, the server will execute an \emph{ecall} operation to update the \emph{IndexTree} (line 31).

\textsf{Search} \textbf{Protocol}. This protocol allows the client to issue conjunctive graph searches. Specifically, the client selects search keywords $(w_1 \wedge...\wedge w_n)$ and sends them to the enclave (line 1). Upon receiving them, the enclave first initializes two empty lists $stokenList$ and $CResList$ to store $w_1$'s $stag$ and final encrypted results respectively (line 2). Then, the enclave traverses \emph{UpdateCnt}$[w_1]$ to compute all of $stag$s for $w_1$ and sends them to the server (lines 3-6). The server initializes an empty list $CidList$, and then traverses the $stokenList$ to get the encrypted $id$ (i.e., $C_{id}$) of $w_1$ from \emph{TSet} and returns them to the enclave (lines 7-11). For each $C_{id_j}$, the enclave decrypts it to get $id_j$ and calcualtes the $xtag_{i,j}$ using $\{w_i\}, i \in [2,n]$  (lines 13-17). To hide the $xtag$'s fingerprint from the server, the enclave generates the fingerprint $\delta$ and finds the corresponding \emph{sub-filterId} from \emph{IndexTree}. If the matching \emph{sub-filter} doesn't exist in the cache table $CFs$, the enclave will execute an \emph{ocall} operation to load the matching \emph{sub-filter} (lines 18-21). After that, the enclave computes two candidate positions $\mu$ and $\nu$ to check whether the $\delta$ exists in the \emph{sub-filter}. Only all $xtag_{i,j},i \in [2,n]$ pass the membership check $id_j$ can be added into the final result $CResList$ (lines 19-26). Finally, the enclave returns $CResList$ to the client (line 27), and the client can obtain all decrypted $id_j$ in $ResList$ (lines 28-31).

\subsection{Extension to Fuzzy Search}
\emph{SecGraph} can also support the fuzzy search (such as sub-string search), e.g., find users whose \emph{name} contains 'Ha'. To enable this, instead of directly creating a $stag$ for each user's \emph{name}, our idea is to split the \emph{name} into a set of pairs composed of a sub-string with fixed-length $s$ and an integer. For example, assume the length of a sub-string is 2, 'Harry' can be split into
$\{(\text{'\#H'},1),(\text{'Ha'},2),(\text{'ar'},3),$ $(\text{'rr'},4),(\text{'ry'},5),(\text{'y\$'},6)\}$, 
where each integer part refers to the absolute position $pos$ of the sub-string in the \emph{name}, $\text{'\#'}$ is the start character and $\text{'\$'}s$ is the terminator. We only make a few changes to \emph{SecGraph} to enable it to support fuzzy search. Specifically, in $\small\mathrm{\textsf{Update}}$ protocol (Alg.\ref{alg:alg2}), the client sends $(w,id, pos,op)$ to the enclave, where $w$ is a sub-string (line 1). Then the enclave calculates $xtag=(w||id||pos)$ (line 2) and $C_{id}=(id||pos) \oplus \mathrm{\textsf{$F_2$}}(K_Z,w)$ (line 9). The posting list for each keyword $w$ is a set of $(id, pos)$ pairs, where $id$ is the vertex whose \emph{name} contains $w$ and $pos$ is the absolute position of $w$ in the \emph{name}. In $\small\mathrm{\textsf{Search}}$ protocol (Alg.\ref{alg:alg3}), the client sends the search keywords in the form of $w_1\wedge (w_2,\Delta_2)\wedge...\wedge (w_n,\Delta_i), i\in[2,n]$, where $\Delta_i$ is the relative position between $w_i$ and $w_1$ (line 1). Upon receiving the $CidList$ from the server, the enclave obtains $(id_j || pos_j) = C_{{id}_j} \oplus F_2(K_Z,w_1)$ (line 15). After that, for each $(w_i,\Delta_i)$, the enclave calculates the $xtag_{i,j}=(w_i||id_j||pos_j+\Delta_i)$ (line 17).

\section{Optimzation}
\noindent\textbf{\emph{SecGraph-G:} Fingerprint Grouping.} We observe that there exists a performance bottleneck in \emph{SecGraph} that much time may be spent on \emph{ocall} operations when loading numerous \emph{sub-filters} into the enclave during a search procedure. To solve the drawback, we propose an optimized scheme \emph{SecGraph-G} to reduce the number of \emph{sub-filters} to be loaded. The strategy is fingerprint grouping by generating the fingerprint of $w$ and the fingerprint of $xtag(w,id_{in})$ and then joining them together to form the final fingerprint instead of just generating the fingerprint of $xtag(w,id_{in})$. In this way, the fingerprints generated by a vertex and each of its adjacent vertices are highly likely to be stored in the same \emph{sub-filter} when constructing the LDCF-encoded \emph{XSet}.

\noindent\textbf{\emph{SecGraph-P:} Checking Parallelization.} Furthermore, we note that the current membership check process of $xtag$ is serial which also seriously affects the search performance of \emph{SecGraph} (lines 13-24 in Alg.\ref{alg:alg3}). With the observation that each membership check process is independent, we propose another optimized scheme \emph{SecGraph-P} to speed up the membership check process during the search procedure. The strategy is checking parallelization by loading all matching \emph{sub-filters} into the enclave first and then parallelizing to check whether each $xtag$ exists or not. In this way, \emph{SecGraph-P} only takes one membership check time to complete all required membership check processes, in the best-case scenario.

\section{Security Analysis}
Before presenting a formal security analysis to show the security guarantee of \emph{SecGraph}, we first define the leakage functions and then use them
to prove the security. In $\small\mathrm{\textsf{Setup}}$ protocol, \emph{SecGraph} leaks nothing to the server except for the empty encrypted database \emph{EDB}. Thus we have $\mathcal{L}^{Stp} \text{=} ( |$\emph{TSet}$|,|$\emph{ITSet}$|,|$\emph{XSet}$|)$, where $|$\emph{TSet}$|,|$\emph{ITSet}$|$ and $|$\emph{XSet}$|$ are ciphertext lengths of data structures of \emph{TSet}, \emph{ITSet} and \emph{XSet} respectively. In $\small\mathrm{\textsf{Update}}$ protocol, \emph{SecGraph} leaks access on \emph{TSet}, \emph{ITSet} and \emph{XSet}. Thus, we have $\mathcal{L}^{Updt}\text{=} (op,|$\emph{TSet}$[stag]|,|$\emph{ITSet}$[ind]|,$ $|$\emph{XSet}$[sub\text{-}filterId]|)$, where $op=insert/delete$ denotes the update operation, \emph{TSet}$[stag]$ indicates the encrypted identifier to be inserted in \emph{TSet} with its $stag$, \emph{ITSet}$[ind]$ indicates the encrypted $stag$ to be inserted in \emph{ITSet} with its $ind$ and \emph{XSet}$[sub\text{-}filterId]$ indicates the fingerprint to be inserted in \emph{XSet} with its $sub\text{-}filterId$. In $\small\mathrm{\textsf{Search}}$ protocol, \emph{SecGraph} leaks the search token $stokenList$  and access pattern on \emph{TSet} when the server finds the matching entries in \emph{TSet} associated with $w_1$, defined as $\mathrm{ap}_{\text{\emph{TSet}}}$ and on \emph{XSet} when the enclave locates the desired \emph{sub-filters}, defined as $\mathrm{ap}_{\text{\emph{XSet}}}$. Thus, we have $\mathcal{L}^{Srch}\text{=}(stokenList,\mathrm{ap}_{\text{\emph{TSet}}},\mathrm{ap}_{\text{\emph{XSet}}}).$ Following the security definition in \cite{DBLP:conf/ndss/PatranabisM21}, we give the formal security definitions.

\begin{define}
	Let $\Pi \text{=} (\small\mathrm{\textsf{Setup}}$, $\small\mathrm{\textsf{Update}}$, $\mathrm{\small\textsf{Search}})$ be our SecGraph scheme. Consider the probabilistic experiments $\small\mathbf{Real}_{\mathcal{A}}(\lambda)$ and $\small\mathbf{Ideal}_{\mathcal{A},\mathcal{S}}(\lambda)$ with a probabilistic polynomial-time($\small\mathrm{\textsf{PPT}}$) adversary and a stateful simulator that gets the leakage function $\small\mathcal{L}$, where $\lambda$ is a security parameter. The leakage is parameterized by $\mathcal{L}^{Stp},\mathcal{L}^{Updt}$ and $\mathcal{L}^{Srch}$ depicting the information leaked to $\small\mathcal{A}$ in \mbox{each procedure.}
	
	$\mathbf{Real}_{\mathcal{A}}(\lambda)$: The challenger initialises necessary data structures by running $\small\mathrm{\textsf{Setup}}$. When inputting graphs chosen by $\small\mathcal{A}$, it makes a polynomial number of updates (i.e., addition and deletion). Accordingly, the challenger outputs the encrypted database EDB=(TSet, ITSet, XSet) with $\small\mathrm{\textsf{Update}}$ to $\small\mathcal{A}$. Then, $\small\mathcal{A}$ repeatedly performs graph searches. In response, the challenger runs $\small\mathrm{\textsf{Search}}$ to output the result to $\small\mathcal{A}$. Finally, $\small\mathcal{A}$ outputs a bit.
	
	$\mathbf{Ideal}_{\mathcal{A},\mathcal{S}}(\lambda)$: Upon inputting graphs chosen by $\small\mathcal{A}$, $\mathcal{S}$ initialises the data structures and creates encrypted database EDB=(TSet, ITSet, XSet) based on $\mathcal{L}^{Stp}$, and passes them to $\small\mathcal{A}$. Then, $\small\mathcal{A}$ repeatedly performs range queries. $\mathcal{S}$ simulates the search results by using $\mathcal{L}^{Updt}$ and $\mathcal{L}^{Srch}$ and returns them to $\small\mathcal{A}$. Finally, $\small\mathcal{A}$ outputs a bit.
	
	We say $\small\Pi$ is $\mathcal{L}$-adaptively-secure if for any $\small\mathrm{\textsf{PPT}}$ adversary $\small\mathcal{A}$, there exists a simulator $\mathcal{S}$ such that $|\mathrm{Pr}[\mathbf{Real}_{\mathcal{A}}(\lambda)\text{=}1]-\mathrm{Pr}[\mathbf{Ideal}_{\mathcal{A},\mathcal{S}}(\lambda)\text{=}1]| \leq negl(\lambda)$, where $negl(\lambda)$ denotes a negligible function in $\lambda$.
\end{define}

\begin{theorem}\label{thm:thm1} 
(Confidentiality of SecGraph). Assuming $(F_1,F_2,F_3)$ are secure PRFs and $(H_1,H_2)$ are secure hash functions. SecGraph is $\mathcal{L}$-secure against an adaptive adversary and ensures forward security and Type-III backward security\footnote{Forward security refers to newly inserted data is no longer linkable to searches issued before, and backward security refers to deleted data is no longer searchable in searches issued later \cite{DBLP:conf/ndss/PatranabisM21}.}.
\end{theorem}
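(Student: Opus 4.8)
The plan is to establish $\mathcal{L}$-security through the standard real/ideal simulation paradigm of Definition~1, and then to read off forward and Type-III backward security from the structure of the update tokens. Because the enclave is isolated and trusted, the honest-but-curious server's entire view consists of the externally stored ciphertexts in \emph{EDB} together with the values the enclave hands to the untrusted host---namely $stag$, $ind$, $C_{id}$, $C_{stag}$, the fingerprint/position pair $(\delta,\mu)$, and during search the list $stokenList$---plus the access patterns these induce. Hence I only need a PPT simulator $\mathcal{S}$ that reconstructs this external transcript from the leakage. For $\small\mathrm{\textsf{Setup}}$, $\mathcal{S}$ allocates empty \emph{TSet}, \emph{ITSet}, \emph{XSet} of the lengths reported by $\mathcal{L}^{Stp}$. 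For each update, $\mathcal{S}$ reads $op$ and the touched locations from $\mathcal{L}^{Updt}$ and writes uniformly random strings of the correct lengths in place of $stag$, $ind$, $C_{id}$, $C_{stag}$ and a random $(\delta,\mu)$ into the indicated \emph{sub-filter}. For each search, $\mathcal{S}$ outputs fresh random tokens for $stokenList$ and replays $\mathrm{ap}_{\text{\emph{TSet}}}$ and $\mathrm{ap}_{\text{\emph{XSet}}}$.

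The indistinguishability of $\mathbf{Real}_{\mathcal{A}}$ and $\mathbf{Ideal}_{\mathcal{A},\mathcal{S}}$ would follow from a short sequence of hybrids that removes one cryptographic assumption at a time. First I would pass to a game in which $F_1$ is replaced by a truly random function, bounding the gap by the PRF-advantage against $F_1$; I would repeat this for $F_2$ and $F_3$, and then replace $H_1$ and $H_2$ by random functions, each transition costing at most a negligible term. Once $F_1, F_3$ are random, the index tokens $stag$, $ind$ are uniform and independent of the underlying $(w, id)$; once $H_1, H_2$ are random the fingerprint $\delta$ and position $\mu$ are uniform; and once $F_2$ is random the masks $C_{id}$ and $C_{stag}$ behave as one-time pads. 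Summing the hybrid gaps gives the required negligible bound between the final game and $\mathbf{Ideal}_{\mathcal{A},\mathcal{S}}$.

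Forward security I would obtain by noting that a freshly inserted entry is indexed by $stag = F_1(K_T, w\|c)$, where $c$ is the current value of the enclave-held counter \emph{UpdateCnt}$[w]$, which is strictly larger than every counter enumerated by any previously issued search (a search on $w$ queries tokens only for indices $1$ up to the then-current value of \emph{UpdateCnt}$[w]$). Since \emph{UpdateCnt} never leaves the enclave, the PRF $F_1$ is evaluated on an input that has never appeared in the transcript, so the new token is indistinguishable from fresh randomness and cannot be linked to any earlier search. For Type-III backward security I would argue that deletion---realized by copying the latest entry over the deleted slot of \emph{TSet}, deleting the corresponding $\delta$ from \emph{XSet}, and removing $ind$ from \emph{ITSet}---exposes exactly the operation type and the touched positions, and thus reveals which insertion each deletion cancels but nothing about the erased content; this is precisely the information admitted by the Type-III profile of \cite{DBLP:conf/ndss/PatranabisM21}.

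The step I expect to be the main obstacle is reconciling the simulator with the reuse of the pad $F_2(K_Z,w)$: because $C_{id} = (id\|weight)\oplus F_2(K_Z,w)$ and $C_{stag} = (w\|c)\oplus F_2(K_Z,w)$ use a mask that depends only on $w$, two ciphertexts sharing the same keyword are correlated rather than independent, so $\mathcal{S}$ cannot simply emit independent random strings for them. The crux is therefore to show that every such correlation is already determined by the access-pattern leakage $\mathrm{ap}_{\text{\emph{TSet}}}$---which reveals exactly the group of \emph{TSet} entries touched by a search on a common $w_1$---so that $\mathcal{S}$ can sample a single random pad per keyword-group and derive consistent ciphertexts, while verifying that no further relation between distinct-keyword entries escapes. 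Making this matching precise, and confirming that the enclave-resident \emph{UpdateCnt} hides the counter information that would otherwise strengthen the leakage beyond Type-III, is where the argument demands the most care.
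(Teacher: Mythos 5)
Your proposal is correct and follows essentially the same route as the paper: a leakage-parameterized simulator that replaces the PRFs and hash functions by random functions/oracles, forward security derived from the fact that \emph{UpdateCnt} never leaves the enclave so new $stag$ values are evaluations of $F_1$ on fresh inputs, and Type-III backward security from observing that the deletion transcript reveals which insertion each deletion cancels but the XOR-masked identifiers themselves remain hidden. The one place you go beyond the paper is the ``main obstacle'' you flag about the reuse of the pad $F_2(K_Z,w)$ across all entries of a keyword: the paper's simulator implicitly resolves this exactly as you propose, by drawing a single oracle value $\mathcal{O}_{F_2}(\tilde{k}_2\|w_1)$ per keyword and XORing it consistently against every matching entry, but it never states that the resulting cross-ciphertext correlations are confined to what $\mathrm{ap}_{\text{\emph{TSet}}}$ already reveals; your insistence on making that matching explicit is a genuine tightening of an argument the paper leaves as a sketch.
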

\begin{proof}
We model the PRFs and the hash functions as random oracles $\{\mathcal{O}_{F_1},\mathcal{O}_{F_2},$ $\mathcal{O}_{F_3},\mathcal{O}_{H_1},\mathcal{O}_{H_2}\}$ and sketch the execution of the simulator $\mathcal{S}$. In $\mathrm{\textsf{Setup}}$ protocol, $\mathcal{S}$ simulates the encrypted database based on $\mathcal{L}^{Stp}$, which has the same size as the real one. Specifically, it includes two dictionaries $\mathcal{D}_1$ and $\mathcal{D}_2$ and a set $\mathcal{T}$. $\mathcal{S}$ further simulates the keys in the enclave by generating random strings ($\overline{k}_{1},\overline{k}_{2},\overline{k}_{3}$), which are indistinguishable from the real ones. When the first graph search sample $(w_1,w_2)$ is sent, $\mathcal{S}$ generates simulated tokens $\tilde{t}_i=\mathcal{O}_{F_1}(\tilde{k}_1||w_1||i)$ from $c$ to 1 and $c$ is the number of matched entries from $\mathcal{L}^{Stp}$. For each matching value $\alpha$ in $\mathcal{D}_1$ with the address $\tilde{t}_i$, another random oracles $\mathcal{O}_{F_2}$ is operated as $\tilde{R}=\mathcal{O}_{F_2}(\tilde{k}_2||w_1)\oplus \alpha$ to obtain $\tilde{R}$ inside, where $\tilde{R}$ has the same length as the real one. With $\tilde{k}_3$, three random strings $\delta = \mathcal{O}_{H_2}(\tilde{k}_3||w_2||\tilde{R}),\beta=\mathcal{O}_{H_1}(\delta)$ and $\gamma=\beta \oplus \sigma$ are calculated to check whether $\delta$ is in any one of the locations of the simulated set $\mathcal{T}$. If yes, $\mathcal{S}$ adds $\tilde{R}$ into the results. When a new triplet $(w,id,weight)$ is added, the results can also be simulated based on $\mathcal{L}^{Updt}$. Due to the pseudorandomness of PRFs and the hash function, $\mathcal{A}$ cannot distinguish between the tokens and results of \mbox{$\mathbf{Real}_{\mathcal{A}}(\lambda)$ and $\mathbf{Ideal}_{\mathcal{A},\mathcal{S}}(\lambda)$.}

Following the definitions of forward and backward security in \cite{DBLP:conf/ndss/PatranabisM21}, we prove \emph{SecGraph} achieves both forward security and Type-III backward security. Forward security is straightforward since the data structure \emph{UpdataCnt} ensures that $\small\mathcal{A}$ cannot generate search tokens to retrieve newly added identifiers when adding a new triplet. As for backward security, remembering when the entries in \emph{ITSet} were added and deleted leaks when additions and deletions for $w$ took place. Extracting all update counts and correlating them with the update timestamps reveals the specific addition that each deletion canceled. Nevertheless, the identifiers are encrypted by XORing a PRF value, the server cannot learn which identifiers contained $w$ but have not been removed. Based on these leakages, \emph{SecGraph} guarantees Type-III backward security.
\end{proof}

\section{Experimental Evaluation}

\begin{table}[t]
\setlength{\tabcolsep}{0.1mm}
\scriptsize
\setlength{\abovecaptionskip}{0.cm}
	\caption{Summary of the graph data used in our experiments.}
	\label{tab:tab1}
	\begin{tabular}{c|c|c|c|c|c}
		\hline 
		\textbf{Dataset}& \textbf{Nodes}  & \textbf{Edges} & \textbf{Edge type} & \textbf{Graph type} & \textbf{Source link} \\
		\hline
		Email&36,692&183,831& Friendship & Undirect&snap.stanford.edu/data/email-Enron.html\\
		Youtube&1,134,890&2,987,624& Exchange  & Undirect&snap.stanford.edu/data/com-Youtube.html\\
		Gplus&107,614&13,673,453& Share & Directed&snap.stanford.edu/data/ego-Gplus.html\\
		\hline
	\end{tabular}
\end{table}

\subsection{Experiment Settings}
In the experiments, we implement the state-of-the-art encrypted graph search scheme PeGraph \cite{DBLP:journals/tifs/WangZJY22}, and our four schemes \emph{SecGraph}, \emph{SecGraph-G}, \mbox{\emph{SecGraph-P}} and \emph{SecGraph-A} (adopting both the fingerprint grouping and checking parallelization strategies) in about 5k LOCs of C++\footnote{Our code: https://github.com/XJTUOSV-SSEer/SecGraph.}. The client and server are deployed on a workstation equipped with an SGX-enabled Intel(R) Core(TM) i7-10700 CPU@2.60GHz with Ubuntu 18.04 server and 64GB RAM. For cryptographic primitives, we use the cryptography library Intel SGX SSL and OpenSSL (v1.1.1n) to implement the pseudorandom function via HMAC-256 and use SHA-256 to generate hash values for fingerprint. For implementing the LDCF-encoded \emph{XSet}, we adopt the open-source code of the LDCF\footnote{The code of LDCF: https://github.com/CGCL-codes/LDCF.} provided in \cite{DBLP:conf/icde/ZhangC0R21}. We use three real-world datasets Email, Youtube, and Gplus in our experiments, as shown in Table.\ref{tab:tab1}. All experiments were repeated 20 times and the average is reported.


\subsection{Performance Evaluation}
In our experiments, we default to setting the fingerprint length, the \emph{sub-filter} size, and bucket size in \emph{SecGraph} as 16 bits, 10,000, and 4, respectively, according to the recommendation of LDCF \cite{DBLP:conf/icde/ZhangC0R21} since there is an acceptable balance between the accuracy and speed of membership check under these parameters.

\noindent\textbf{Update Performance.} We first evaluate the insertion performance of PeGraph and \emph{SecGraph} as the used dataset size ratio increases from 20\% to 100\%. As we can see from Fig. \ref{exp:exp1}, the insertion time costs of the two schemes increase with the used dataset size ratio. It takes 6,128 ms, 103,790 ms, and 383,615 ms for \emph{SecGraph} to construct the encrypted database using the entire Email, Youtube, and Gplus datasets, respectively. It is worth noting that \emph{SecGraph} is considerably up to 58$\times$ faster than PeGraph. The reason is that PeGraph needs to perform a large number of expensive exponentiation modulo operations to calculate $xtag$ for each keyword-identifier pair to build \emph{XSet}, while \emph{SecGraph} only computes the same number of hash values due to the design of the LDCF-encoded \emph{XSet}. We further experiment that \emph{SecGraph} takes on average 0.36 ms and 0.37 ms to insert and delete a pair, respectively. The experiment results demonstrate that has better update performance than PeGraph.

\noindent\textbf{Search Performance.} Here, we evaluate the search performance of all considered schemes as the search keyword count increases from 2 to 10. Fig. \ref{exp:exp2} depicts that, for every dataset, \emph{SecGraph} takes less time to search compared with PeGraph and our optimized schemes perform better than \emph{SecGraph}. Specifically, \mbox{\emph{SecGraph}} yields up to 208$\times$ improvement in search time compared with \mbox{PeGraph}, and \emph{SecGraph-G}, \emph{SecGraph-P} and \emph{SecGraph-A} respectively yield up to 572$\times$, 3,331$\times$ and 3,430$\times$ improvements in search time compared with PeGraph. First, there are two reasons why the search procedure in \emph{SecGraph} is faster than that in PeGraph: (1) \emph{SecGraph} only requires one roundtrip to send search tokens to the server due to the design of proxy-token generation method, while PeGraph requires two roundtrips, causing the communication cost in PeGraph is up to 540$\times$ larger than that in \emph{SecGraph}. (2) \emph{SecGraph} generates one $xtag$ just by computing a hash value to check whether or not it exists due to the design of the LDCF-encoded \emph{XSet}, while PeGraph needs to execute an expensive exponentiation modulo operation to calculate a $xtag$. Then, \emph{SecGraph-G} performs better than \emph{SecGraph} since the former adopting the fingerprint grouping strategy reduces the number of \emph{ocall} (i.e., the number of loaded \emph{sub-filters}) by 96\% compared with the latter. \emph{SecGraph-P} performs better than \emph{SecGraph} since the former parallelizes the procedure of the membership check of the $xtag$.

\begin{figure*}[t]
\centering
\setlength{\abovecaptionskip}{0.cm}
\setcounter{subfigure}{0}
\subfigure[Email dataset]
{\includegraphics[width=0.32\linewidth]{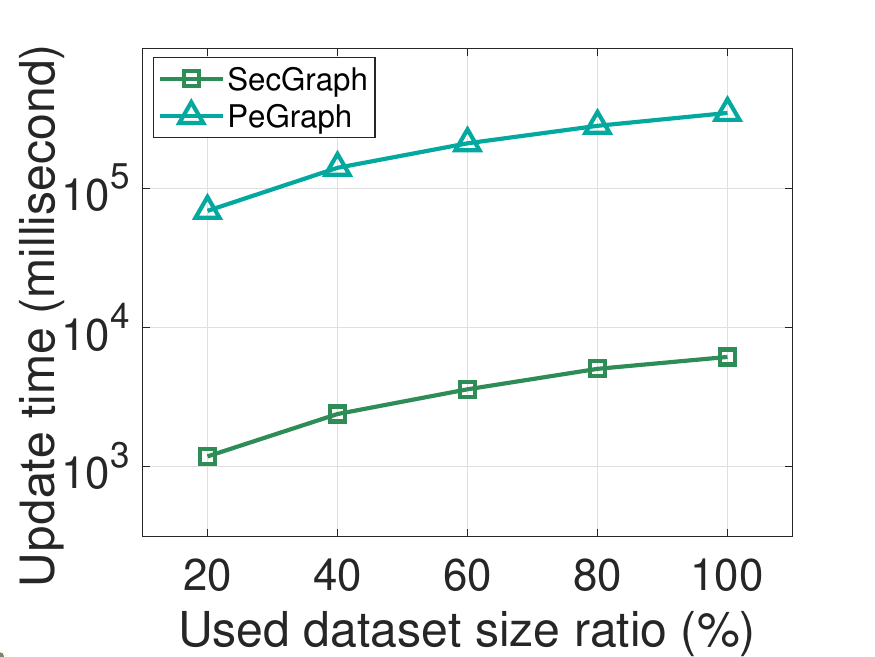}}
\subfigure[Youtube dataset]
{\includegraphics[width=0.32\linewidth]{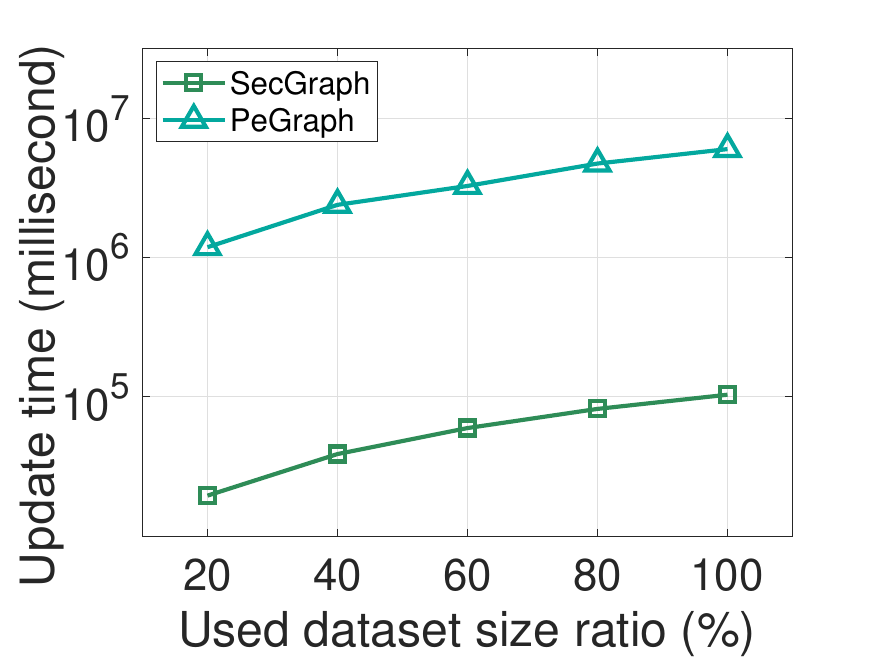}}
\subfigure[Gplus dataset]
{\includegraphics[width=0.32\linewidth]{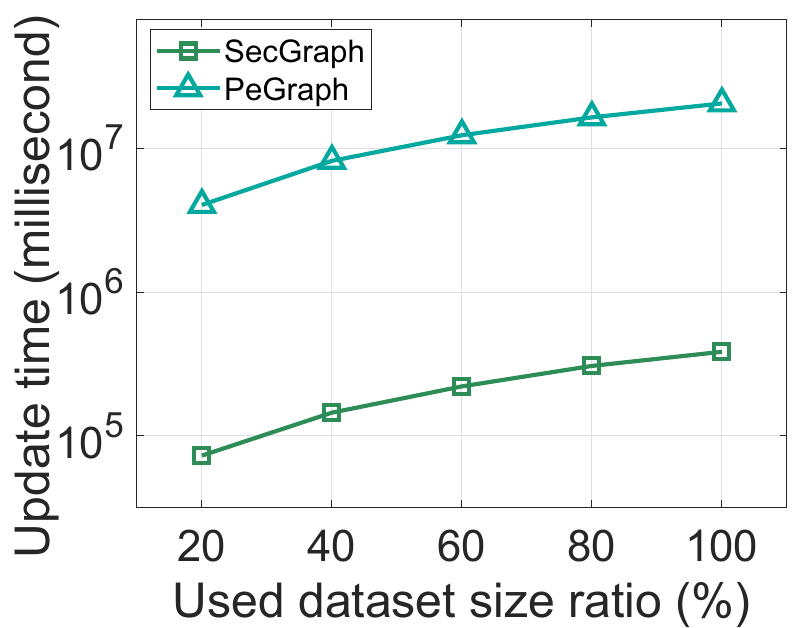}}
\renewcommand{\figurename}{Fig.}
\caption{Insertion performance in distinct datasets.}
\label{exp:exp1}
\vspace{-1em}
\end{figure*}

\begin{figure*}[t]
\centering
\setlength{\abovecaptionskip}{0.cm}
\setcounter{subfigure}{0}
\subfigure[Email dataset]
{\includegraphics[width=0.32\linewidth]{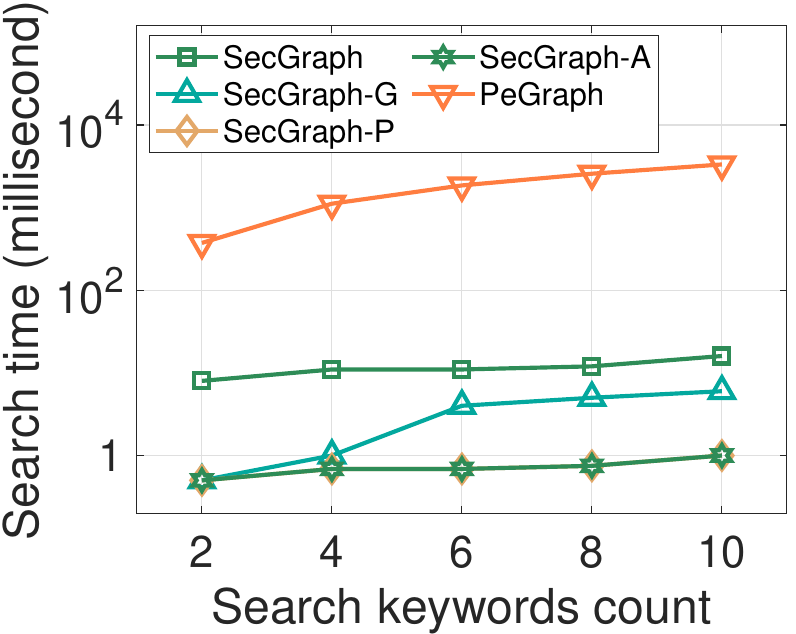}}
\subfigure[Youtube dataset]
{\includegraphics[width=0.32\linewidth]{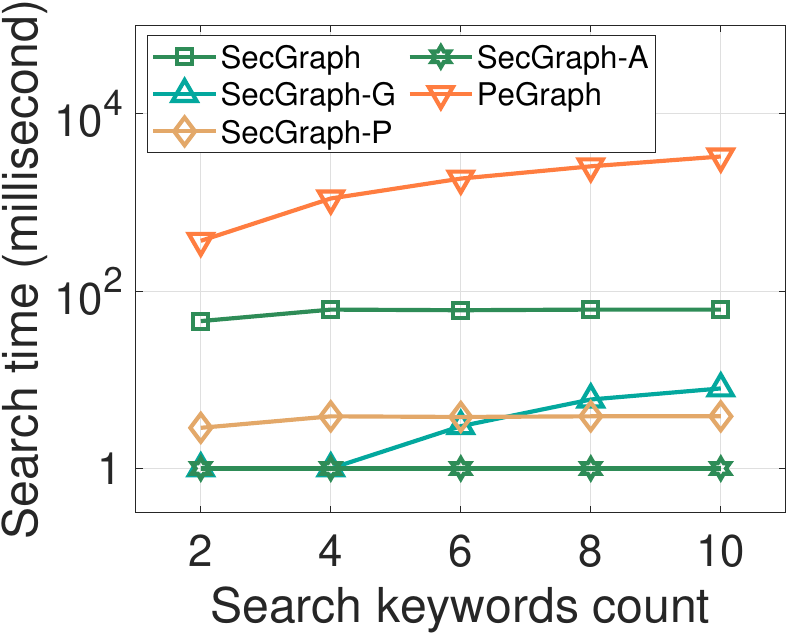}}
\subfigure[Gplus dataset]
{\includegraphics[width=0.32\linewidth]{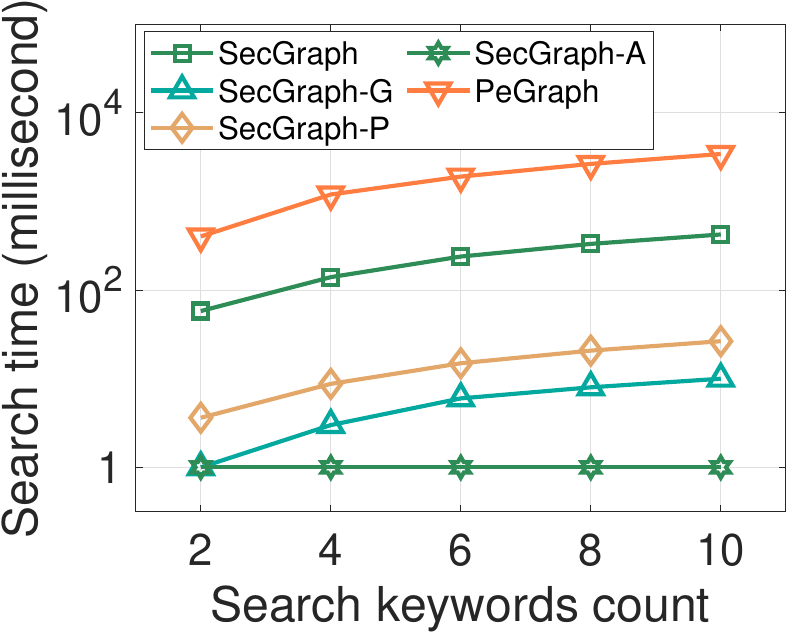}}
\renewcommand{\figurename}{Fig.}
\caption{Search performance in distinct datasets.}
\label{exp:exp2}
\end{figure*}

\begin{figure*}[t]
\vspace{-1.5em}
\centering
\setlength{\abovecaptionskip}{0.cm}
\setcounter{subfigure}{0}
\subfigure[Email dataset]
{\includegraphics[width=0.32\linewidth]{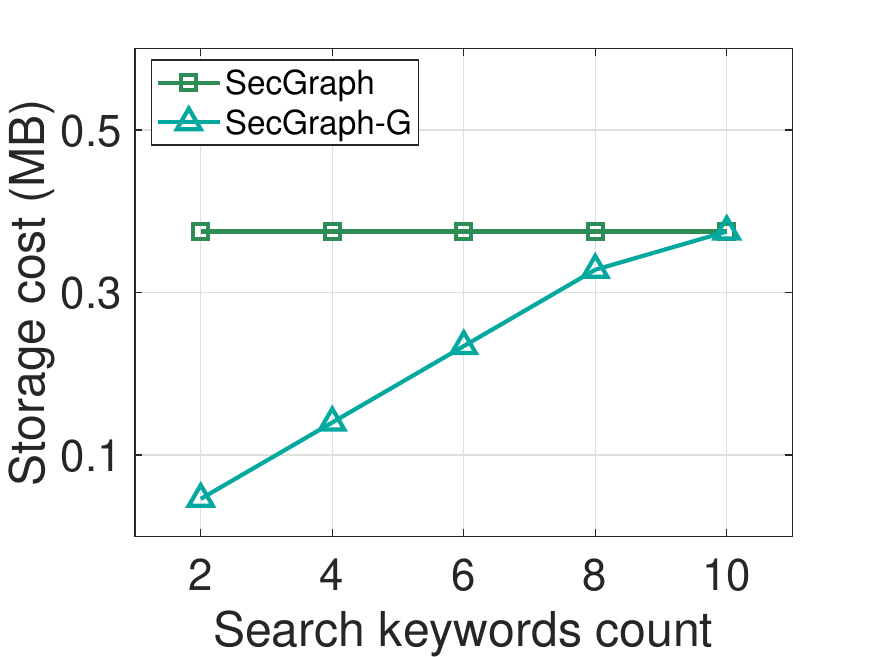}}
\subfigure[Youtube dataset]
{\includegraphics[width=0.315\linewidth]{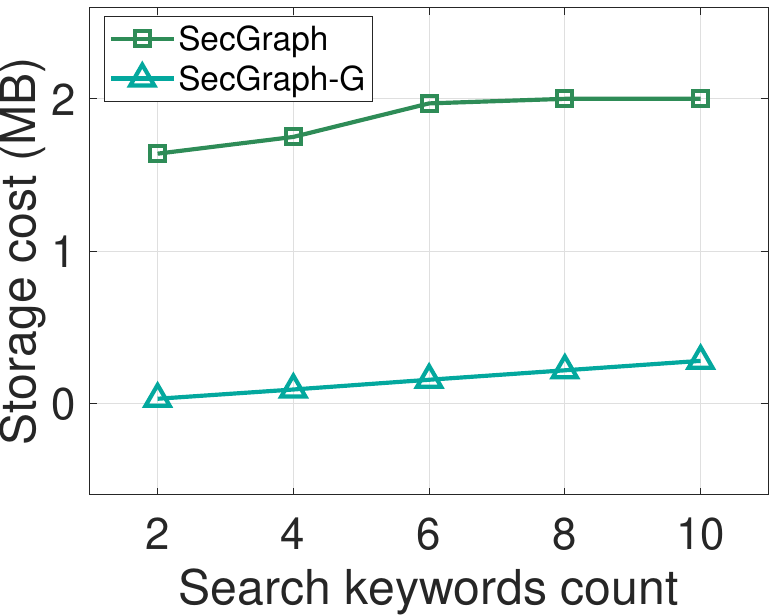}}
\subfigure[Gplus dataset]
{\includegraphics[width=0.32\linewidth]{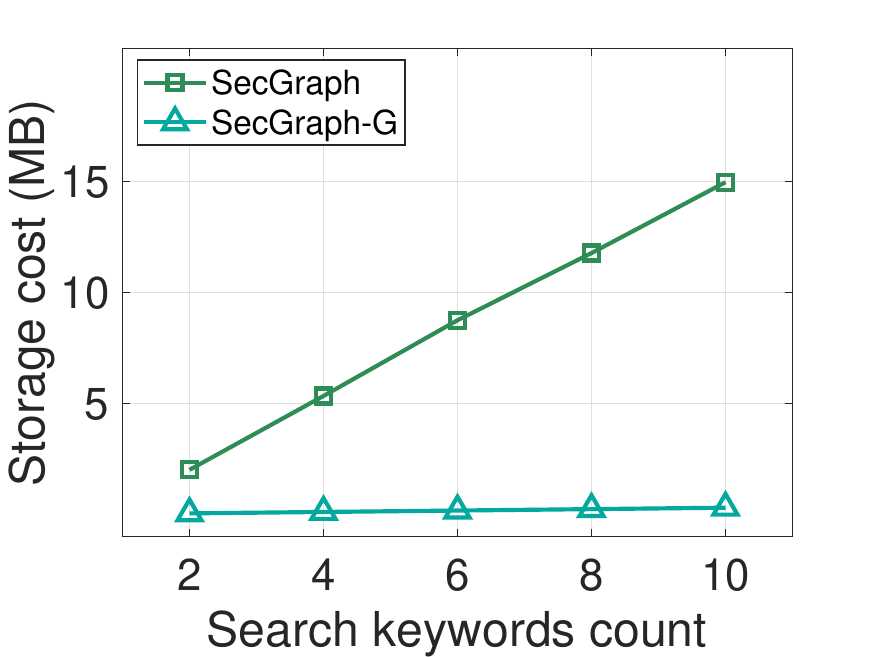}}
\renewcommand{\figurename}{Fig.}
\caption{Enclave storage cost in distinct datasets.}
\label{exp:exp3}
\vspace{-1em}
\end{figure*}
\begin{figure*}[t]
\centering
\setlength{\abovecaptionskip}{0.cm}
\setcounter{subfigure}{0}
\subfigure[Email dataset]
{\includegraphics[width=0.32\linewidth]{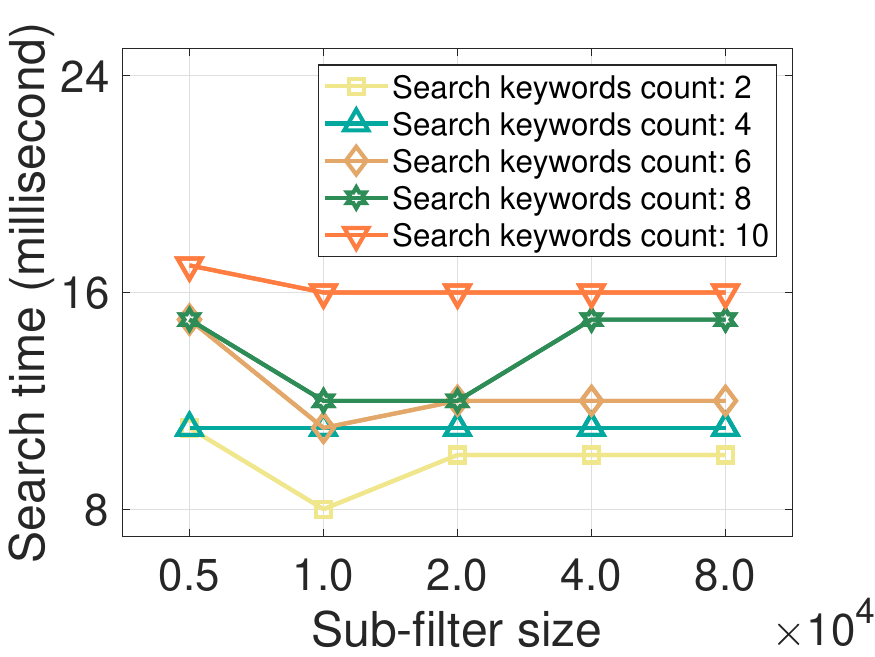}}
\subfigure[Youtube dataset]
{\includegraphics[width=0.32\linewidth]{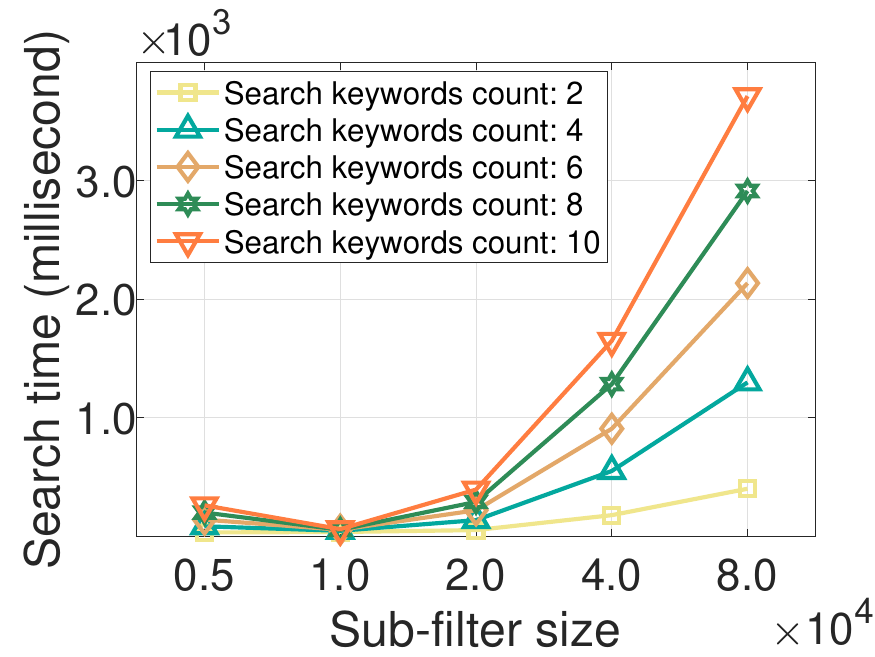}}
\subfigure[Gplus dataset]
{\includegraphics[width=0.32\linewidth]{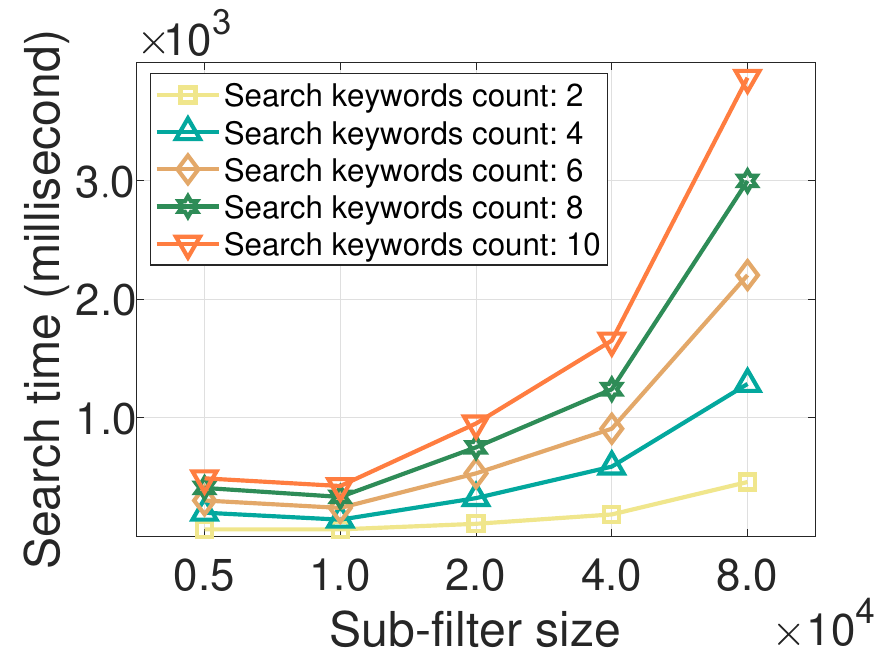}}
\renewcommand{\figurename}{Fig.}
\caption{Effect of the \emph{sub-filter} size.}
\label{exp:exp4}
\end{figure*}
\begin{figure*}[t]
\centering
\setlength{\abovecaptionskip}{0.cm}
\setcounter{subfigure}{0}
\subfigure[Insertion time cost]
{\includegraphics[width=0.32\linewidth]{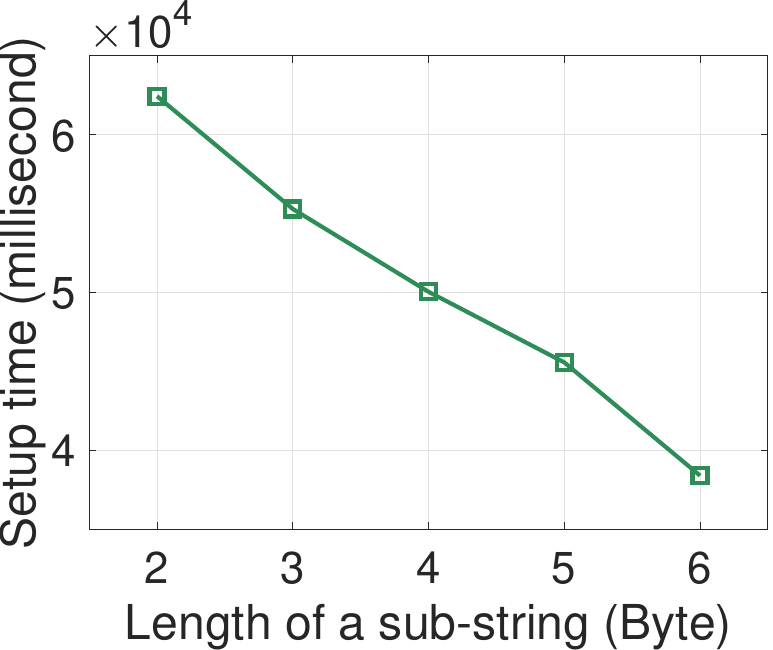}}
\subfigure[Storage cost]
{\includegraphics[width=0.32\linewidth]{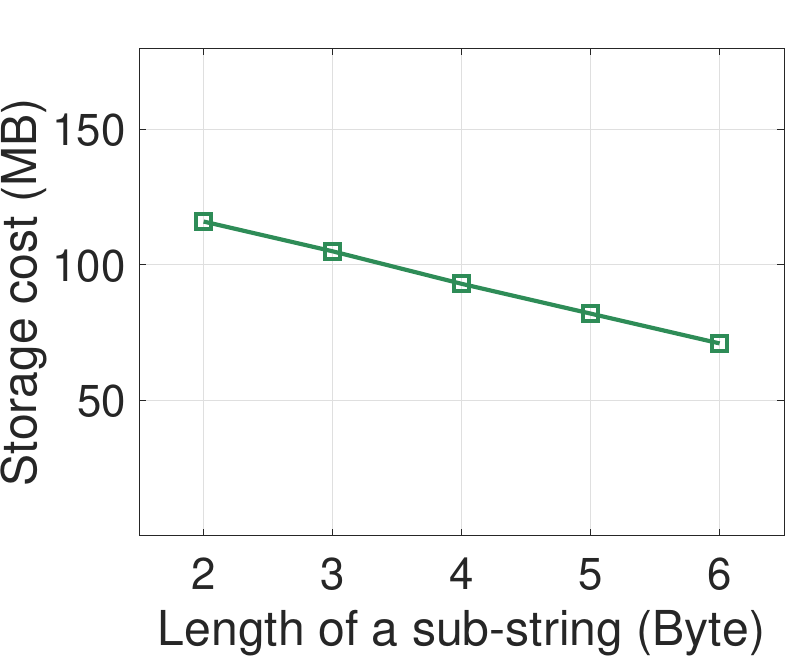}}
\subfigure[Search time cost]
{\includegraphics[width=0.32\linewidth]{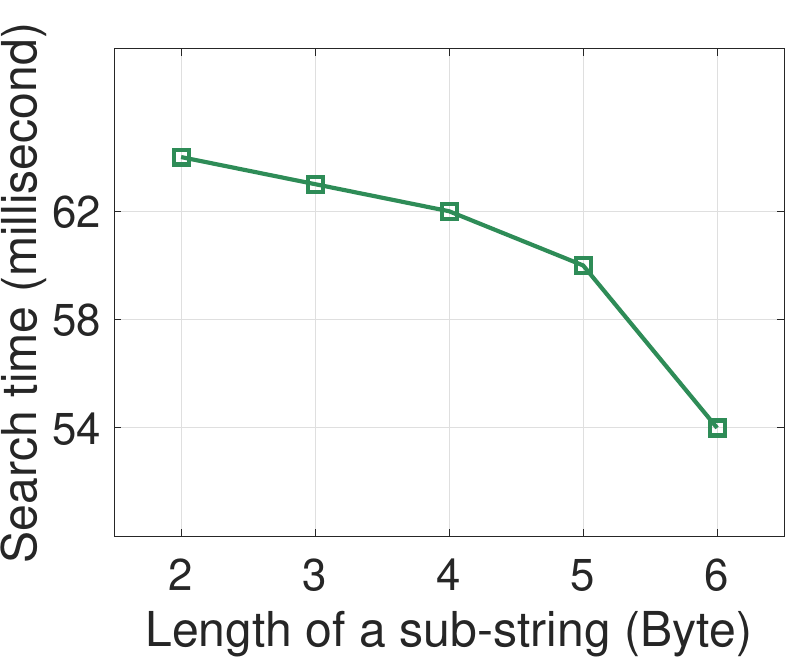}}
\renewcommand{\figurename}{Fig.}
\caption{Performance of fuzzy search enabled \emph{SecGraph}.}
\label{exp:exp5}
\end{figure*}

Furthermore, we evaluate the enclave storage cost of \emph{SecGraph} and \emph{SecGraph-G} as the search keywords count increases from 2 to 10. As shown in Fig.\ref{exp:exp3}, we observe different results on datasets of different scales. For the small-size dataset (i.e., Email dataset), the enclave storage cost in \emph{SecGraph} hardly changes with the growth of search keywords count as there are only 20 \emph{sub-filters} in total and when the search keywords count is 2, all \emph{sub-filters} need to be loaded into the enclave. However, \emph{SecGraph-G} only requires loading 1 \emph{sub-filter} when the search keywords count is 2. For the middle-size dataset (i.e., Youtube dataset), the enclave storage cost in \emph{SecGraph} initially increases linearly with the search keywords count, until the search keywords count reaches 6, and the enclave storage cost in \emph{SecGraph} remains stable and unchanged. The reason is that when the search keywords count is 6, all \emph{sub-filters} need to be loaded into the enclave. But for \emph{SecGraph-G}, the enclave storage cost increases linearly with the search keywords count. For the large-size dataset (i.e., Gplus dataset), the enclave storage cost in \emph{SecGraph-G} also increases linearly with the search keywords count and there are 1408 \emph{sub-filters} in total and even when the search keywords count is 10, only 9 \emph{sub-filters} need to be loaded. Besides that, we surprisingly find that it only takes only 15 MB to store the loaded \emph{sub-filters} in the enclave.

\noindent\textbf{Effect of Parameters.} Next, we evaluate the search time cost in \emph{SecGraph} as the \emph{sub-filter} size (i.e., the number of fingerprints contained in a \emph{sub-filter}) increases from 5,000 to 80,000 under different search keywords count settings. Fig.\ref{exp:exp4} shows that, for every dataset, the search time cost is the lowest when the \emph{sub-filter} size is 10,000, which demonstrates the rationality of our default \emph{sub-filter} size setting. We analyze that setting the \emph{sub-filter} size too small will lead to a large number of \emph{sub-filters} being loaded during a search procedure, resulting in a high number of \emph{ocall}s and a long \emph{ocall} time. However, setting it too large can reduce the number of \emph{sub-filters} to be loaded appropriately, but it increases the amount of transferred data and also increases the \emph{ocall} time.

\noindent\textbf{Performance of Fuzzy Search Enabled \emph{SecGraph}.} 
Finally, we evaluate the performance of extended \emph{SecGraph} that supports fuzzy search as the sub-string length increases from 2 bytes to 6 bytes in the Email dataset. Specifically, as shown in Fig. \ref{exp:exp5}(a), the total insertion time cost in \emph{SecGraph} decreases as the sub-string length increases. This is reasonable since the shorter the sub-string length, the more sub-strings each vertex's \emph{name} is split into, which also means more keywords to be extracted, requiring more time to construct the encrypted database. Fig. \ref{exp:exp5}(b) shows that the storage cost of \emph{TSet} decreases linearly as the sub-string length increases due to the same reason mentioned above. Fig. \ref{exp:exp5}(c) shows that the search time slowly decreases as the sub-string length increases because the longer the sub-string length, the fewer matching results returned.

\section{Conclusion}
In this paper, we propose an SGX-based efficient and confidentiality-preserving graph search scheme \emph{SecGraph} to provide secure and efficient search services over encrypted graphs. Firstly, we design a new proxy-token generation method to reduce the communication cost. Then, we design an LDCF-encoded \emph{XSet} to reduce the computation cost. Moreover, we design a \emph{Twin-TSet} to enable encrypted search over dynamic graphs. We further extend \emph{SecGraph} into two optimized schemes \emph{SecGraph-G} and \emph{SecGraph-P} by adopting the fingerprint grouping and checking parallelization strategies, respectively, to speed up the search procedure. Finally, experiments and security analysis show that \emph{SecGraph} can achieve secure and efficient search over dynamic graphs. In the future, we will explore efficient and confidentiality-preserving graph search schemes that support more plentiful search services such as range search, boolean search, etc.

\bibliographystyle{splncs04}
\bibliography{reference.bib}
\end{document}